\theoremstyle{plain}
\newtheorem{theorem}{Theorem}[section]
\newtheorem{lemma}[theorem]{Lemma}
\newtheorem{proposition}[theorem]{Proposition}
\theoremstyle{definition}
\newtheorem{remark}[theorem]{Remark}
\newtheorem{notation}[theorem]{Notation}
\newtheorem{definition}[theorem]{Definition}
\newtheorem{assump}{Assumption}[section]
\numberwithin{equation}{section}
\newenvironment{ sys_eq }{\ left \ lbrace \ begin { array }{ @ {} l@ {}}}{\ end { array }\ right .}
\newcommand{\cD}{\mathcal D}
\newcommand{\cG}{\mathcal G}
\newcommand{\cH}{\mathcal H}
\newcommand{\cR}{\mathcal R}
\newcommand{\cS}{\mathcal S}
\newcommand{\cT}{\mathcal T}
\newcommand{\al}{\alpha}
\newcommand{\be}{\beta}
\newcommand{\ga}{\gamma}
\newcommand{\Ga}{\Gamma}
\newcommand{\de}{\delta}
\newcommand{\si}{\sigma}
\newcommand{\Si}{\Sigma}
\newcommand{\Om}{\Omega}
\newcommand{\RR}{\mathbb R}
\newcommand{\rar}{\rightarrow}
\newcommand{\vol}{\operatorname{vol}}
\newcommand{\id}{\operatorname{id}}
\newcommand{\dive}{\operatorname{div}}
\title[Quantization]{A note on  quantization in the presence of gravitational shock waves}
\author[Disconzi]{Marcelo M. Disconzi}
\address{Department of Mathematics\\
Vanderbilt University, Nashville, TN 37240, USA}
\email{marcelo.disconzi@vanderbilt.edu}
\begin{document}

\maketitle

\begin{abstract}
We study the quantization of a free scalar field when the background metric satisfies Einstein's
equations and develops gravitational shock waves.
\end{abstract}


\section{Introduction.}

The general notion of ``singularity," although not always precisely defined
and having different meanings in different situations, plays a crucial role in many 
physical theories. Broadly speaking, one expects that  
singularities will form in regimes where the system undergoes extreme 
dynamic conditions, e.g., turbulence in Fluid Dynamics or gravitational collapse in
General Relativity (GR).

One particular notion of singular behavior frequently encountered is 
that of a shock wave. Roughly, it 
corresponds to a discontinuity
in the solutions of the equations of motion --- in particular, it implies that
such solutions exist only in a generalized sense. In the case of Einstein's
theory of gravity, shock wave discontinuities happen in the first derivatives 
of the metric (see definition \ref{defi_GSW}). It follows that the spacetime
does not carry a ``smooth geometry,"  with the curvature tensor (which 
depends on second derivatives of the metric) being meaningfully defined
only in a distributional sense.

Although in a different context, 
physicists have been dealing with singular spacetimes
for quite a long time, with Hawking's singularity theorems \cite{HE} and its ramifications
(Cosmic Censorship, Penrose Inequality, etc.) being at the core of such
developments. It is widely believed that a successful quantum theory of gravity
will resolve many, if not all, of the difficulties and puzzles that arise in 
singular backgrounds. Unfortunately, despite much of the progress that has been 
witnessed in the last few decades, such a theory is not yet available.
This does not mean, however, that we cannot learn 
something about quantum effects in curved spacetimes, singular ones included. 
In the range where curvature effects 
 cannot be neglected but are still far from the Planck scale,
 the powerful (if yet difficult) machinery 
 of Quantum Field Theory (QFT) in curved backgrounds is available to us.
 To this day, some of the best hints of what the long sought quantum theory of gravity
might look like come from the study of quantum fields over a 
non-flat background that satisfies Einstein's equations. In fact, 
the ability of reproducing the black hole temperature --- discovered 
by Hawking via 
a careful application of  QFT on a background that undergoes classical
gravitational collapse \cite{H} --- is often regarded as the first test
for any theory attempting to quantize the gravitational field.
This, of course, is not different than many other instances in 
Physics where semi-classical formulations are useful in
providing insight into what the full quantum theory is, leading
in this way to fruitful directions of inquiry.

Furthermore, one should not have to wait until a full-fledged theory
of quantum gravity is in place in order to understand interesting 
physical phenomena,
which involve quantum effects
in a background where the singularities are still 
amenable to a fully classical treatment. This paper is a step in this direction.
We shall study the quantization of a free scalar field over a spacetime where gravitational
shock waves are formed. Einstein's equations will have to be defined in a suitable
weak sense, and some standard arguments, like the construction of propagators,
adapted to this weaker setting. A similar aspiration, namely, devise a quantization
scheme that can potentially incorporate 
singular background data, was explored in Ref. \cite{AD}.

The tools we shall employ are, in a sense, not new. Most of the constructions
go back to the work of Lichnerowicz on tensor distributions \cite{L1} 
(see Ref. \cite{D} for similar constructions and generalizations). These have 
the convenience of being simultaneously suited to the study of shock waves, on
one hand \cite{L2}, and to the quantization of fields in curved --- although smooth 
--- spacetimes, on the other hand \cite{L3}. Not surprisingly, many of the arguments
here presented consist of carefully checking that the 
results of Ref. \cite{L3} carry onto the framework of shock waves.
 As such  arguments can be done in charts, 
 our point of view will be purely local. Generalizations
to a  global setting are possible, provided that further conditions
are taken into account; we briefly comment on this at the end.

Although this paper focuses uniquely on mathematical aspects,
we stress that the study of gravitational shock waves in general, 
and corresponding scenarios where quantum effects might become important
in particular, has attracted significant attention in the Physics community 
(see e.g. Ref. \cite{DP1,DP2,DP3, HS, tH, HU, VV}
and references therein), hence the importance of laying out its mathematical foundations.

\begin{notation} (i) $\vol(g)$ denotes the volume form of the metric $g$,
and $|g|$ its determinant (in a local coordinate patch);
(ii) if $\Om \subseteq M$, $M$ a $C^\ell$ manifold, $\cD_{k,p}(\Om)$
denotes the space of $p$-tensors of class $C^k$, $0\leq k \leq \ell$,
 with compact support in $\Om$, i.e.,
test tensors in $\Om$; we shall use the given
 metric to identify covariant and contra-variant tensors,
hence referring  simply to ``$p$-tensors";
(iii) $\cD_{k,p}^\prime(\Om)$ is the space of continuous real-valued forms 
on $\cD_{k,p}(\Om)$,
where continuity is understood in the sense of the theory of distributions \cite{S};
(iv) we sometimes write  $\cD_p(\Om)$ 
when the differentiability is clear from the 
context, although for the most part it will suffice to deal with
elements in $\cD_{0,p}(\Om)$ and its dual;
(v) $\langle v, u \rangle$ means $u \in \cD_{k,p}(\Om)$ evaluated at 
$v \in \cD^\prime_{k,p}(\Om)$, whereas $\langle u_1, u_2 \rangle_g$ denotes 
the inner product between the $p$-tensors $u_1$ and $u_2$, although, when 
the distributions arise from locally integrable tensors, 
we naturally identify $\langle \cdot, \cdot \rangle$ 
and $\langle \cdot , \cdot \rangle_g$; 
(vi) we shall denote by $T^\prime$ the
tensor distribution defined by a locally integrable tensor $T$; 
(vii)$\nabla$ will denote
covariant differentiation with respect to the metric $g$. 
(viii) $[\al | \ga | \be ]$ indicates anti-symmetrization in 
the indices  $\al$ and $\be$.

Hypotheses 
in the form ``Assumption X'' (e.g., Assumption \ref{assumption_metric}) are
assumed throughout, therefore we do not state them in the theorems and definitions.
\end{notation}

\section{Gravitational shock waves.}
We start recalling some definitions and fixing out notation. Let $M$ be an
oriented 4-dimensional
differentiable  manifold of class $C^2$ and piecewise $C^4$, and $\Om \subset M$ 
a contractible open subset. 
Whenever coordinates are employed, it is implicitly assumed that $\Om$ is taken
small enough as to belong to the domain of a coordinate patch.
Let $\Si$ be a regular hypersurface in $\Om$ which is given locally as the level set
$\{ \varphi = 0 \}$ of a $C^1$ function $\varphi: \Om \rar \RR$; saying that $\Si$
is regular then means $d\varphi \neq 0$ on $\Si$. We suppose that $\Si$ partitions 
$\Om$ into two disjoint domains $\Om^+$ and $\Om^-$ given by
$\{ \varphi > 0 \}$ and $\{ \varphi < 0 \}$, respectively.

\begin{definition} Let $T$ be a $p$-tensor on $\Om$, and assume that $T$ 
is of class $C^0$ over $\Om^+$ and $\Om^-$. $T$ is \emph{regularly
discontinuous across}\footnote{\emph{``r\'eguli\`erment discontinues \`a la travers\'ee
de.} \cite{L2}"} $\Si$ if, when $\varphi \rar 0^+$
(i.e., $\varphi$ tends to zero through positive values) (resp. $\varphi \rar 0^-$),
$T$ converges uniformly to a $p$-tensor $T^+$ (resp. $T^-$) defined on $\Si$.
The \emph{discontinuity}\footnote{We avoid the more familiar notation $[ \cdot ]$ for the
discontinuity of a function in that it may cause confusion when written inside an equation.} $\{\{T \}\}$ of $T$ is the $p$-tensor on $\Si$ defined by
 $\{\{T\}\} = T^+ - T^-$.
\end{definition}

 Let $g$ be
Lorentzian metric\footnote{Our convention is $(+ \, - \, - \, -)$.}
of class $C^0$ and piecewise $C^2$ on $M$. We fix $\Om$, $\varphi$ and $\Si$ once
and for all, and henceforth assume for the rest of the paper:

\begin{assump} (i)  $\varphi$ is $C^3$ on $\Om^+$ and $\Om^-$, with second 
and third derivatives regularly discontinuous across $\Si$; (ii) $g$ is $C^0$
on $\Om$, $C^2$ on $\Om^+$ and $\Om^-$, and has first and second derivatives regularly discontinuous
across $\Si$.
\label{assumption_metric}
\end{assump} 

Unless stated otherwise, it will be assumed in this section 
that coordinates $\{ x^\al \}_{\al = 0}^3$
are adapted to $\Si$, what means that $x^0 = \varphi$ and $\partial_i$ is tangent
to $\Si$, $i=1,2,3.$ From now on, Greek indices run from $0$ to $3$, while Latin 
indices run from $1$ to $3$.

\begin{definition} In coordinates $\{ x^\al \}_{\al=0}^3$
 adapted to $\Si$, the metric components $g_{ij}$
will be called \emph{physical components with respect to $\Si$}, while $g_{0\al}$ will be 
called 
\emph{gauge}  (or \emph{non-physical}) \emph{components with respect to $\Si$}.
A change of coordinates $x^\prime{}^\al = x^\prime{}^\al(x^0,\dots,x^3)$
is called a \emph{change of gravitational gauge with respect to $\Si$}, or 
\emph{change of gauge} for short,
if it is the identity on $\Si$, and the the values of $\left.g_{\al\be}\right|_{\Si}$
 as well as of $\left. \partial_0 g_{ij}\right|_{\Si}$
are invariant under this change.
\end{definition}

\begin{remark} Despite the familiar terminology, the reader should not be led to think that
$\Si$ is a space-like hypersurface. In fact, for the case of interest in this paper, $\Si$
will be null-like.
\end{remark}

The necessity of treating $g_{ij}$ and $g_{0\al}$ differently comes from the 
well-known fact that GR has a gauge freedom due to the action 
of the diffeomorphism group of $M$. In fact, in a neighborhood of $\Si$
gauge changes take the form
\begin{gather}
x^\prime{}^\al(x^0,x^1,x^2,x^3) = x^\al + \frac{(x^0)^2}{2}\Big ( \partial_0 x^\prime{}^\al(0,x^1,x^2,x^3) + r^\al(x^0,x^1,x^2,x^3) \Big ),
\nonumber
\end{gather}
where $r^\al$ converges, along with its derivatives, uniformly to zero when $x^0 \rar 0$.
It is therefore possible to arrange the terms in parenthesis as to produce or eliminate
discontinuities of $\partial_0 g^\prime_{0\al}$ on $\Si$ --- showing that such 
terms carry no intrinsic physical meaning \cite{L2}. Furthermore, from Assumption \ref{assumption_metric}  one readily sees that
\begin{gather}
\{\{ \partial_i g_{\al\be} \}\} = 0 = \partial_i \{\{ g_{\al\be} \}\}.
\label{tang_der_g_zero}
\end{gather}
From this and the above discussion, we see that it suffices to focus on $\partial_0 g_{ij}$ in our study of discontinuities
of the metric across $\Si$ --- see definition \ref{defi_GSW}.

We now turn our attention to the appropriate notion of
weak solution for the Einstein's equations in the study of gravitational 
shock waves.  For each fixed pair of indices $\be\de$, the Christoffel symbols
$\Ga^\al_{\be\de}$ locally define  a vector field $\Ga^\al \equiv \Ga^\al_{\be\de}$,
which in turn, under
 our hypotheses, is locally integrable and hence defines a vector distribution $(\Ga^\al_{\be\de})^\prime$. 
 Following Lichnerowicz, it is therefore natural to define the \emph{curvature tensor distribution}
as
\begin{gather}
\cR^\al_{\be\ga\de} 
= \nabla_\ga (\Ga^\al_{\be\de})^\prime - \nabla_\de (\Ga^\al_{\be\ga})^\prime,
\nonumber
\end{gather}
where the covariant derivatives are interpreted in a distributional sense \cite{L1,L2}.
To find a simple formula for the associated distributional Ricci curvature and relate
it to the ordinary one, we shall use the following lemma, whose proof is an application
of the tools developed in Ref. \cite{L2}.

\begin{notation}
We put $\varphi_\ga = \partial_\ga \varphi$, thinking of these
as the components of the locally defined one form $d\varphi$.
\end{notation}

\begin{lemma} 
Let $T$ be a $p$-tensor of class $C^1$ on $\Om^+$ and $\Om^-$ such that 
$T$ and $\partial_\al T$ are regularly discontinuous across $\Si$. Then $\nabla T$  is 
regularly discontinuous across
$\Si$. Furthermore, $T$ and $\nabla T$ define tensor distributions $T^\prime$
 and $(\nabla T)^\prime$ such that
\begin{gather}
\nabla T^\prime - (\nabla T)^\prime =d\varphi \otimes \de_\Si \, \{\{ T \}\},
\nonumber
\end{gather}
where the covariant  derivative $\nabla T^\prime$ of $T^\prime$
is in the sense of distributions, 
and $\de_\Si$ is the Dirac delta on $\Om$ with support on $\Si$.
\label{lemma_cov_diff_T}
\end{lemma}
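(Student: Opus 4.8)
The plan is to compute both $\nabla T'$ and $(\nabla T)'$ against an arbitrary test tensor and to read off their difference as a surface term supported on $\Si$. First I would record that the hypotheses make everything well defined: since $T$ is $C^1$ on $\Om^+$ and $\Om^-$ and regularly discontinuous together with $\partial_\al T$, both $T$ and its pointwise derivative $\nabla T$ (which exists off $\Si$) are bounded near $\Si$, hence locally integrable on $\Om$, so $T'$ and $(\nabla T)'$ are genuine regular tensor distributions. The distributional covariant derivative $\nabla T'$ I would define by duality from the identity $\int_\Om \langle \nabla S, u\rangle_g\,\vol(g) = -\int_\Om \langle S, \dive u\rangle_g\,\vol(g)$, valid for smooth compactly supported $S$; this adjoint relation is exact precisely because $\nabla$ annihilates $\vol(g)$, and the Christoffel terms enter only through the (regularly discontinuous, but bounded) coefficients multiplying the \emph{regular} distribution $T'$, so they contribute nothing singular. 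Because $T'$ is regular, the pairing $\langle T', \dive u\rangle = \int_\Om\langle T,\dive u\rangle_g\,\vol(g)$ makes sense even though $\dive u$ is itself only regularly discontinuous across $\Si$.

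The heart of the argument is then the computation of $\langle(\nabla T)', u\rangle = \int_\Om\langle \nabla T, u\rangle_g\,\vol(g)$. I would split this integral over $\Om^+$ and $\Om^-$, on each of which $T$ is $C^1$, and integrate by parts. Writing $W^\mu$ for the vector obtained by contracting $T$ into the last $p$ slots of $u$, the identity $\nabla_\mu W^\mu = |g|^{-1/2}\partial_\mu(|g|^{1/2}W^\mu)$ reduces each piece to the fundamental theorem of calculus in the adapted coordinate $x^0 = \varphi$. The bulk contributions reassemble into $-\int_\Om\langle T,\dive u\rangle_g\,\vol(g) = \langle\nabla T', u\rangle$, while the two boundary contributions at $x^0\to 0^+$ and $x^0\to 0^-$ do not cancel: they combine into $-\int_\Si \{\{W^0\}\}\,|g|^{1/2}\,dx^1dx^2dx^3$. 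Since $u$ and $|g|^{1/2}$ are continuous across $\Si$ (the latter because $g$ is $C^0$), this jump equals $\{\{T\}\}$ contracted with $u$, and using $\varphi_\mu = \partial_\mu x^0 = \de_\mu^0$ in adapted coordinates I would rewrite the integrand so as to display the one-form $d\varphi$ explicitly. Defining $\de_\Si$ by $\langle\de_\Si,\psi\rangle = \int_\Si \psi\,|g|^{1/2}\,dx^1dx^2dx^3$, the boundary term is exactly $-\langle d\varphi\otimes\de_\Si\,\{\{T\}\}, u\rangle$, whence $\nabla T' - (\nabla T)' = d\varphi\otimes\de_\Si\,\{\{T\}\}$ after tracking the signs.

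It remains to justify the first assertion, that $\nabla T$ is itself regularly discontinuous. This I would get from the decomposition $\nabla_\al T = \partial_\al T - \Ga\cdot T$: the partial derivatives $\partial_\al T$ are regularly discontinuous by hypothesis, the Christoffel symbols $\Ga$ are regularly discontinuous because they are built from the first derivatives of $g$, which are regularly discontinuous by Assumption \ref{assumption_metric}, and a product of tensors each converging uniformly from either side again converges uniformly; hence $\nabla T$ has the required one-sided limits. The step I expect to be most delicate is the boundary analysis in the second paragraph, for the reason flagged in the Remark: $\Si$ is null, so $\varphi_\mu$ is a null conormal and there is no unit normal or nondegenerate induced metric with which to split the surface term in the usual Riemannian fashion. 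The way around this is exactly the Lichnerowicz device of never separating the conormal from the surface measure: one keeps the coordinate delta $\de(x^0)$ dressed with the continuous density $|g|^{1/2}$, so that the degenerate geometry of $\Si$ never enters, and the invariance of the resulting expression is manifest because $d\varphi$, $\de_\Si$ and $\{\{T\}\}$ are each well-defined geometric objects independent of the adapted chart.
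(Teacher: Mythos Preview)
Your proposal is correct; the underlying computation is the same integration by parts over $\Om^\pm$ that the paper uses, but you organize it differently. The paper introduces the characteristic functions $\chi_\pm$ of $\Om^\pm$, writes $T' = \chi_+ T + \chi_- T$ and $(\nabla T)' = \chi_+\nabla T + \chi_-\nabla T$, proves once and for all that $\nabla\chi_\pm = \pm\, d\varphi\,\de_\Si$ (by exactly the divergence computation you sketch), and then obtains the lemma from the Leibniz rule $\nabla(\chi_\pm T) = (\nabla\chi_\pm)\otimes T + \chi_\pm\nabla T$. You instead pair $(\nabla T)'$ with a test tensor, split, and integrate by parts directly on $T$, never introducing $\chi_\pm$. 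Your route is slightly more self-contained for this single lemma; the paper's route isolates the reusable identity $\nabla\chi_+ = d\varphi\,\de_\Si$, which is convenient because the same identity (and the same $\de_\Si$, defined there via $\vol(g) = d\varphi\wedge\omega$) is invoked again in the next lemma. Two minor remarks: the density factor should be $\sqrt{-|g|}$ in the paper's Lorentzian convention, and your definition of $\nabla T'$ via the adjoint of $\dive$ agrees with the Lichnerowicz definition (distributional $\partial$ plus multiplication by $\Ga$) only because $T'$ is regular and $\Ga$ is bounded---you noted this, and it is precisely why the Christoffel terms produce no surface contribution.
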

\begin{proof}
Let $A_\al^\be = \Ga_{\al\ga}^\be dx^\ga$ be the connection one form
of $g$ on $\Om$. From our hypotheses, we see that $A_\al^\be$ is $C^1$ in
$\Om^+$ and $\Om^-$; it is also regularly discontinuous across $\Si$.
It follows that $\nabla T$ is regularly discontinuous across $\Si$ since 
$\partial_\al T$ and $T$ are so.

Let $\chi_+$ (resp. $\chi_-$) be the function defined a.e. in $\Om$ which
equals to $1$ in $\Om^+$ (resp. $\Om^-$) and zero 
in $\Om^-$ (resp. $\Om^+$). Since $T$ and $\nabla T$ are in $L^1_{loc}(\Om)$,
they define tensor distributions,  which can be written as
\begin{gather}
T^\prime = \chi_+ T + \chi_- T,
\label{T_prime_lemma}
\end{gather}
and 
\begin{gather}
(\nabla T)^\prime = \chi^+ \nabla T + \chi_- \nabla T,
\label{nabla_T_prime_lemma}
\end{gather}
where these equalities are to be understood in the sense of 
distributions\footnote{Equalities among quantities in $\cD^\prime_p(\Om)$ are by definition
in the distributional sense, so we shall no longer write ``in the sense of distributions."}. 
We can write
$\vol(g) = d\varphi \wedge \omega$. Notice that $\omega$ depends on $\varphi$, but
if $\eta$ is another $3$-form such that $\vol(g) = d\varphi \wedge \eta$, then
$\eta = \omega + d\varphi \wedge \si$ for some $2$-form $\si$. Hence, for
any test function $u$ 
\begin{gather}
- \int_{\partial \Om^+} u \, \omega = \int_{\partial \Om^-} u \, \omega
\nonumber
\end{gather}
\noindent (where $\partial \Om^+$ and $\partial \Om^-$ are the oriented boundaries 
so that $\partial \Om^+ = - \partial \Om^-$)
has a well-defined value independent of the choice of $\omega$. In particular,
$\langle \de_\Si, u\rangle$ is given by
\begin{gather}
\langle \de_\Si, u\rangle = - \int_{\partial \Om^+} u \, \omega
 = \int_{\partial \Om^-} u \, \omega \, .
\nonumber
\end{gather}
Notice that because $\omega$ is continuous this gives that $\de_\Si$ is 
in fact an element of $\cD_{0,0}^\prime(\Om)$. Since $\de_\Si$ can be multiplied by
locally integrable functions, this same formula also shows that 
$\de_\Si$ has a well-defined action on $\cD_{k,0}(\Om)$, $k=1,2$.

Clearly $\nabla \chi_+ \in \cD^\prime_{1}(\Om)$, and for any $u \in \cD_1(\Om)$
\begin{align}
\begin{split}
\langle \nabla \chi_+, u \rangle &= 
\langle \nabla \chi_+, u \rangle_g =
-\langle \chi_+, \dive u \rangle_g =
- \int_{\Om^+} \frac{1}{\sqrt{-|g|}} \partial_\al (\sqrt{-|g|} u^\al ) \vol(g) \\
& = - \int_{\Om^+}  \partial_\al (\sqrt{-|g|} u^\al ) dx^0 \wedge dx^1 \wedge dx^2 \wedge
dx^3 \\
& =- \int_{\partial \Om^+}  \sqrt{-|g|} u^0 dx^1 \wedge dx^2 \wedge dx^3 \\
& = - \int_{\partial \Om^+}  \varphi_\al u^\al \, \omega = \langle d\varphi \,
\de_\Si, u \rangle,
\end{split}
\nonumber
\end{align}
so that $\nabla \chi_+ = d\varphi \, \de_\Si$. Similarly, one finds
 $\nabla \chi_- = - d\varphi \, \de_\Si$. Therefore,
 \begin{gather}
 \nabla (\chi^+ T) = (\nabla \chi_+ ) \otimes T + \chi_+ \nabla T =
  d\varphi \, \de_\Si \otimes T + \chi_+ \nabla T.
  \nonumber
 \end{gather}
Computing a similar expression for $ \nabla (\chi^+ T)$, using 
(\ref{T_prime_lemma}) and (\ref{nabla_T_prime_lemma}), yields the result.

\end{proof}
The discontinuities of $g$ will enter in the expression for 
the distributional curvature. The following lemma is useful to handle them.

\begin{lemma}
For each of the functions $g_{\al\be}$, there exists a 
scalar distribution $d_{\al\be}$ such that 
\begin{gather}
\de_\Si \{\{ \partial_\ga g_{\al\be} \}\} = \varphi_\ga d_{\al\be}. 
\nonumber
\end{gather}
\label{lemma_d_b}
\end{lemma}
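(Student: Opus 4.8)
The plan is to establish a Hadamard-type compatibility condition --- because $g$ is continuous across $\Si$, the jump in its first derivatives can only point in the conormal direction $d\varphi$ --- and then to promote the resulting pointwise proportionality to a distributional identity by multiplying through by $\de_\Si$. Following the convention of this section, I would work in coordinates adapted to $\Si$, so that $x^0 = \varphi$ and hence $\varphi_\ga = \partial_\ga \varphi = \de_\ga^0$; in particular $\varphi_\ga$ has a single nonvanishing component, $\varphi_0 = 1$.

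First I would invoke (\ref{tang_der_g_zero}), according to which the tangential jumps vanish, $\{\{ \partial_i g_{\al\be} \}\} = 0$ for $i = 1,2,3$. Thus, for each fixed pair $\al\be$, the only possibly nonzero component of the jump covector $\{\{ \partial_\ga g_{\al\be} \}\}$ is the normal one, $\{\{ \partial_0 g_{\al\be} \}\}$. Comparing this with the component structure of $\varphi_\ga$ just noted --- both objects being supported on the single index $\ga = 0$ --- I obtain the pointwise identity on $\Si$
\begin{gather}
\{\{ \partial_\ga g_{\al\be} \}\} = \varphi_\ga \, \{\{ \partial_0 g_{\al\be} \}\}.
\nonumber
\end{gather}
Equivalently, and independently of coordinates: contracting with any vector $t^\ga$ tangent to $\Si$ gives $t^\ga \{\{ \partial_\ga g_{\al\be} \}\} = \{\{ t^\ga \partial_\ga g_{\al\be} \}\} = 0$, since $t^\ga \partial_\ga g_{\al\be}$ is a derivative of the continuous function $g_{\al\be}$ along $\Si$; as this holds for every such $t$, the jump is proportional to the conormal $\varphi_\ga$, with a proportionality scalar that does not depend on $\ga$.

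Next I would set $d_{\al\be} := \de_\Si \, \{\{ \partial_0 g_{\al\be} \}\}$. Since the first derivatives of $g$ are regularly discontinuous across $\Si$ (Assumption~\ref{assumption_metric}), the one-sided limits $(\partial_0 g_{\al\be})^\pm$ are continuous on $\Si$, and hence so is their difference $\{\{ \partial_0 g_{\al\be} \}\}$; as established in the proof of Lemma~\ref{lemma_cov_diff_T}, $\de_\Si \in \cD_{0,0}^\prime(\Om)$ may be multiplied by locally integrable functions, so $d_{\al\be}$ is a well-defined scalar distribution. Multiplying the displayed identity by $\de_\Si$ and using that $\varphi_\ga$ is continuous then gives $\de_\Si \{\{ \partial_\ga g_{\al\be} \}\} = \varphi_\ga \, \de_\Si \{\{ \partial_0 g_{\al\be} \}\} = \varphi_\ga d_{\al\be}$, which is the assertion.

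I expect no serious obstacle: the substance is the compatibility relation (\ref{tang_der_g_zero}), already at hand, together with the fact that $\de_\Si$ pairs against continuous functions. The only points needing a little care are the well-definedness of the product $\de_\Si \{\{ \partial_0 g_{\al\be} \}\}$ and the claim that one scalar $d_{\al\be}$ serves for all $\ga$ at once; both are settled by the observation that the proportionality factor in the Hadamard relation is independent of $\ga$.
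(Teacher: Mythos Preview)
Your argument is correct and is in fact the most direct route to the statement: invoke the Hadamard compatibility relation \eqref{tang_der_g_zero} to see that the jump covector $\{\{\partial_\ga g_{\al\be}\}\}$ lives entirely in the conormal direction, then multiply by $\de_\Si$ and define $d_{\al\be} := \de_\Si\,\{\{\partial_0 g_{\al\be}\}\}$.

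The paper proceeds a bit differently. Rather than factoring the jump first, it begins by computing the tangential covariant derivatives of the Dirac measure itself, showing via an integration-by-parts argument (in the style of Lemma~\ref{lemma_cov_diff_T}) that $\nabla_i\de_\Si = 0$, and hence that $\nabla\de_\Si = d\varphi\,\nabla_0\de_\Si$. Only then does it combine this with \eqref{tang_der_g_zero} to conclude. Your approach isolates exactly what is needed for the lemma as stated --- the pointwise vanishing of the tangential jumps on $\Si$ --- and avoids the detour through $\nabla_i\de_\Si$. The paper's computation, on the other hand, yields the auxiliary distributional fact $\nabla_X\de_\Si = 0$ for $X$ tangent to $\Si$, which is of independent interest for the subsequent curvature calculations; but strictly for the purposes of this lemma your shorter argument suffices and is arguably cleaner.
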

\begin{proof} 
Arguing similarly to the proof of Lemma \ref{lemma_cov_diff_T} we see that
\begin{align}
\langle \nabla_i \delta_\Si, u \rangle & =  \int_{\partial \Om^+} \partial_i (\sqrt{-|g|}\, u)
\, dx^1\wedge dx^2 \wedge dx^3 
\nonumber \\
& =  \int_{\Om^+} \partial_i \partial_0 (\sqrt{-|g|}\, u) \,
dx^0 \wedge  dx^1\wedge dx^2 \wedge dx^3  \nonumber 
\\
 & =  0,
\nonumber
\end{align}
where we used that $u$, being a test function, is compactly supported
in $\Om$. A similar result holds in $\Om^-$ and we conclude that
 $\nabla_i \de_\Si = 0$ (in a more invariant fashion, we 
 can say $\nabla_X \de_\Si = 0$
 for vector fields $X$ tangent to $\varphi=$constant). It follows that
 \begin{gather}
 \nabla \de_\Si = d\varphi \nabla_0 \de_\Si.
 \nonumber
 \end{gather}
 This combined with (\ref{tang_der_g_zero}) (and, of course,
 our assumptions on $g$) produces the desired result.
\end{proof}
Intuitively, $d_{\al\be}$ corresponds to the ``jumps" that come when 
we try to differentiate $\de_\Si \{ \{ g_{\al\be} \} \}$ across $\Si$.
 It is not difficult to see that the existence of $d_{\al\be}$ is equivalent to 
the formulation given in Ref. \cite{L2}, namely, to find, for each $g_{\al\be}$,
a function $b_{\al\be}$ defined over $\Si$ such that
\begin{gather}
\{\{ \partial_\ga g_{\al\be} \}\} = \varphi_\ga b_{\al\be} \text{ on } \Si 
\label{def_b}
\end{gather}
and satisfying\footnote{Notice that, in principle, $b_{\al\be}$ is defined only on $\Si$,
but 
this suffices since $\de_\Si$ is supported on $\Si$; alternatively
$b_{\al\be}$ can be extended to $\Om$ without affecting the distributional
equality $d_{\al\be} = \de_\Si b_{\al\be}$.}  $d_{\al\be} = \de_\Si b_{\al\be}$.
\begin{remark}
$b_{\al\be}$ depends on the choice of equation for $\Si$. The functions
$b_{\al\be}$ were initially introduced by Lichnerowicz \cite{L2} and are important 
in determining the conditions of shock for gravitational waves. Such conditions will
appear below when we construct the propagator for the Klein-Gordon equation.
\end{remark}

From Lemma \ref{lemma_cov_diff_T}, Lemma \ref{lemma_d_b} and the above considerations
 it follows that 
\begin{gather}
\cR_{\al\be\ga\de} = R_{\al\be\ga\de}^\prime + \de_\Si \cH_{\al\be\ga\de},
\nonumber
\end{gather}
where the distribution $R_{\al\be\ga\de}^\prime$, with 
$R_{\al\be\ga\de}$ being the curvature tensor, is well-defined 
 because of Assumption \ref{assumption_metric},
and $\cH_{\al\be\ga\de}$ is the tensor on $\Si$ given by
\begin{gather}
-2 \cH_{\al\be\ga\de} = b_{\al\ga} \varphi_\be \varphi_\de  
- b_{\al\de}\varphi_\be \varphi_\ga + b_{\be\de}\varphi_\al\varphi_\ga -
b_{\be\ga} \varphi_\al \varphi_\de.
\nonumber
\end{gather}
The distributional Ricci tensor is defined as
\begin{gather}
\cR_{\al\be} = R_{\al\be}^\prime +  \de_\Si \cH_{\al\be},
\label{dist_Ricci}
\end{gather}
where
\begin{gather}
2 \cH_{\al\be} = b_{\al\mu} \varphi^\mu \varphi_\be
+  b_{\be\mu }\varphi^\mu \varphi_\al - b^\mu_\mu \varphi_\al \varphi_\be 
- b_{\al\be} \varphi^\mu \varphi_\mu,
\label{H_al_be}
\end{gather}
so that $\cR_{\al\be}$ is formally the trace of the distributional curvature tensor.

Recall that given a stress-energy tensor $\cT_{\al\be}$ 
(possibly identically zero), one can write Einstein's equations as
\begin{gather}
R_{\al\be} = \kappa \rho_{\al\be},
\label{Einstein_eq}
\end{gather}
where $\kappa$ is a constant and 
\begin{gather}
\rho_{\al\be} = \cT_{\al\be} - \frac{1}{2} \cT g_{\al\be},
\nonumber
\end{gather}
with $\cT$ the trace of $\cT_{\al\be}$. The last ingredient we need to define
the distributional Einstein's equations is the regularity of $\cT_{\al\be}$.

\begin{assump} $\cT_{\al\be}$ is a given symmetric two-tensor, continuous 
on $\Om^+$ and $\Om^+$ and regularly discontinuous across $\Si$.
\end{assump} 
It follows that $\rho_{\al\be}$ shares the same regularity
properties of $\cT_{\al\be}$. 
The reason why we think of $\cT_{\al \be}$ as given is that
existence of solutions to the distributional Einstein's equations ---
in which case 
$T_{\al\be}$ has a determined functional form but depends on 
the metric and the matter fields of the problem --- will not be investigated;
rather we shall assume we are given a solution that defines a background
on which fields will be quantized. 

\begin{definition} The distributional Einstein's equations are defined as
\begin{gather}
\cR_{\al\be} = \kappa \rho_{\al\be}^\prime.
\label{dist_Einstein}
\end{gather}
\end{definition}

\begin{remark} When the metric and $\cT_{\al\be}$ are sufficiently regular (say, $C^2$) on the
whole of $\Om$, we see, from 
(\ref{def_b}), that the functions $b_{\al\be}$ vanish identically. Also,
$R_{\al\be}^\prime$ and  $\rho_{\al\be}^\prime$ 
can be identified with the classical 
$R_{\al\be}$ and  $\rho_{\al\be}$. From (\ref{dist_Ricci}) and 
(\ref{H_al_be}) it then follows that (\ref{Einstein_eq}) and
(\ref{dist_Einstein}) agree.
\end{remark}

\begin{definition} The hypersurface $\Si$ is called a \emph{wave front} and is said
to define a \emph{gravitational shock wave} if $g$ satisfies 
(\ref{dist_Einstein}), and the first derivatives of the physical components 
of $g$ are discontinuous across $\Si$, and also
regularly discontinuous across $\Si$.
\label{defi_GSW}
\end{definition}
It is possible to show that a shock wave $\Si$ is necessarily 
null-like\footnote{Recall that hydrodynamic shock waves have the property of being 
supersonic before the shock and subsonic after the shock, with ordinary waves propagating
at the sound speed. Here, the speed of light plays the role of the sound speed, hence
shocks cannot be ``superluminal" before the shock. 
};
had it not been null-like, and hence characteristic to the reduced Einstein equations, 
the values of the induced metric and its derivatives
would uniquely determine a regular solution on both sides of $\Si$, preventing 
discontinuities. See Ref. \cite{L2} for details.

\section{Quantization.}
In this section, we adapt the techniques of Ref. \cite{L3}
to the weak setting developed above. We assume the same hypotheses
and notation as before. We shall deal only with a 
free scalar field, although it is very likely that such results can be generalized to 
tensor and spinor fields. Not surprisingly, the extension to interacting theories, 
on the other hand,
is expected to pose severe difficulties. For the rest of the paper, we let $m > 0$
be a fixed parameter. For a compact set $K \subset \Om$, we denote by 
$C^+(K)$ (resp. $C^-(K)$) the future (resp. past) of $K$ in the usual sense of 
GR, and the cone of $K$ the set
$C(K) = C^+(K)\cup C^-(K)$.
 As our point of view is purely local, strictly speaking $C^+(K)$, is
the future of $K$ within $\Om$ (analogously for $C^-(K)$).

\begin{proposition} For each fixed $x \in \Om$, there exist two elementary kernels
$E^\pm_x$ satisfying 
\begin{gather}
(\Box_g + m^2) E^\pm_x = \de_x,
\nonumber
\end{gather}
where $\de_x$ is the Dirac delta supported at $x$. $E^+_x$ (resp. $E^-_x$) is 
unique and has support on $C^+(x)$ (resp. $C^-(x)$).
\label{prop_elem_kernels}
\end{proposition}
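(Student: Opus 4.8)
The plan is to construct the retarded kernel $E^+_x$ and obtain $E^-_x$ by the time-reversed argument (interchanging the roles of $C^+(x)$ and $C^-(x)$), so it suffices to treat $E^+_x$. I would follow the Hadamard--Lichnerowicz parametrix construction of \cite{L3}, checking at each step that it survives the reduced regularity of Assumption \ref{assumption_metric}. Writing the Klein--Gordon operator in the chart as $\Box_g + m^2 = \frac{1}{\sqrt{-|g|}}\partial_\al(\sqrt{-|g|}\,g^{\al\be}\partial_\be) + m^2$, the coefficients $g^{\al\be}$ and $\sqrt{-|g|}$ are $C^0$ and piecewise $C^2$ with first derivatives regularly discontinuous across $\Si$; in particular they are Lipschitz with $L^\infty$ first derivatives, which is exactly the threshold at which both the energy method and the parametrix iteration still function.

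First I would build a parametrix $P^+_x$ with the correct leading singularity and support in $C^+(x)$, and then correct it. Using the world function $\Ga(x,\cdot)$ (half the squared geodesic interval from $x$) together with Hadamard coefficients $U_k(x,\cdot)$ determined recursively by transport equations along the geodesics emanating from $x$, one forms the usual retarded series whose leading term reproduces $\de_x$ under $\Box_g + m^2$. By Assumption \ref{assumption_metric}, the metric, hence $\Ga$ and each $U_k$, is smooth on $\Om^+$ and $\Om^-$ and regularly discontinuous across $\Si$, so $P^+_x$ is a well-defined distribution on $\Om$ whose singular support lies on the light cone of $x$ and whose support lies in $C^+(x)$. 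The remainder $S_x := (\Box_g + m^2)P^+_x - \de_x$ then has an $L^1_{loc}$ part, coming from the finitely many terms retained in the series, together with $\de_\Si$-supported terms produced when derivatives fall across $\Si$; the latter are governed by Lemma \ref{lemma_cov_diff_T} and Lemma \ref{lemma_d_b} through the functions $b_{\al\be}$, and carry support on $\Si \cap C^+(x)$. Convolving $S_x$ against $P^+_x$ and iterating yields a Volterra-type (Picard) series which is locally finite because all supports propagate forward along $C^+$; summing it produces an exact solution $E^+_x$ with $(\Box_g+m^2)E^+_x = \de_x$ and $\operatorname{supp} E^+_x \subseteq C^+(x)$.

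For the support statement and for uniqueness I would rely on an energy estimate for $\Box_g + m^2$. Contracting the stress tensor of the scalar field with a timelike multiplier and integrating over a lens-shaped region gives the standard inequality; since $g$ is continuous and its first derivatives are bounded, the zeroth- and first-order terms are absorbed by a Gr\"onwall argument, and the only delicate boundary contribution is the flux through $\Si$. Here one uses that $\Si$ is null, hence characteristic for $\Box_g$, so that the principal flux through it is a non-negative null-energy term that does not spoil the estimate. Finite propagation speed then confirms $\operatorname{supp} E^+_x \subseteq C^+(x)$, and if $E_1,E_2$ are two retarded kernels their difference solves the homogeneous equation with data vanishing on every Cauchy slice lying to the past of $x$; propagating this forward through the energy inequality forces the difference to vanish identically.

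I would expect the main obstacle to be precisely the behaviour at the null shock front $\Si$, which is simultaneously the locus where the metric loses a derivative and a characteristic surface on which the Hadamard transport equations degenerate. The crux is to verify that these transport equations still determine regularly discontinuous coefficients across $\Si$ --- using $\{\{\partial_i g_{\al\be}\}\}=0$ from (\ref{tang_der_g_zero}) and the controlled normal jump encoded in $b_{\al\be}$ --- and that the $\de_\Si$-supported curvature contributions entering $S_x$ are harmless both for the convergence of the iteration and for the sign of the energy flux through $\Si$.
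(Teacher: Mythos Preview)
Your outline is in the right spirit, but it misses the ingredient that actually makes the construction go through, and this is a genuine gap rather than a cosmetic difference from the paper.

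The paper does not attempt the full Hadamard parametrix with world function and recursively defined coefficients $U_k$. For a metric that is only $C^0$ and piecewise $C^2$, the exponential map and world function have at best Lipschitz regularity off $\Si$, and your claim that ``$\Ga$ and each $U_k$ is smooth on $\Om^+$ and $\Om^-$ and regularly discontinuous across $\Si$'' is not justified: geodesics from $x$ to points of $\Om^+$ may well cross $\Si$, so the world function is not determined by $g|_{\Om^+}$ alone. The paper instead follows Choquet-Bruhat's construction \cite{B,B-Long}: the parametrix $\si^+_x$ is built directly on the light cone $\Ga_x=\partial C^+(x)$ via an Ansatz $\widetilde{\si}_x^+=B_x/\tau$ in the canonical parameter of the null generators, and $B_x$ is determined by an integral equation whose coefficients involve only \emph{tangential} derivatives of $g$ along $\Ga_x$.

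The decisive point you never invoke is that the distributional Einstein equations (\ref{dist_Einstein}) force the shock conditions $\varphi^\mu b_{[\al|\mu|}\varphi_{\be]}=0$ on $\Si$. These give $\varphi^\mu\{\{\Ga^\be_{\al\mu}\}\}=0$, so that $\varphi^\mu\nabla_\mu$ is a well-defined operator along $\Si$ and the null generators of $\Si$ are honest geodesics. Hence for $x\in\Si$ one has $\Ga_x=\Si$ near $x$, and the only metric derivatives entering the integral equation for $B_x$ are the tangential ones controlled by (\ref{tang_der_g_zero}). This is precisely what keeps the remainder $Q^+_x$ an \emph{integrable function} supported on $\Ga_x$, so that the Neumann iteration is the ordinary one from \cite{dR}. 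In your scheme, by contrast, you allow $S_x$ to carry $\de_\Si$ pieces and then propose to iterate; composing forward-supported kernels against $\de_\Si$-type singularities along a characteristic surface is exactly the step that is not under control, and you give no mechanism to absorb or cancel these terms.

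Your energy argument for uniqueness and finite speed is fine and matches what the paper implicitly uses. But for existence the field equations are not decorative here: without the shock conditions there is nothing forcing the null geodesics through $x\in\Si$ to stay inside $\Si$, nothing guaranteeing that the transport data along $\Ga_x$ are regular, and hence no reason the parametrix error is tame enough to iterate away. That is the idea your proposal is missing.
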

\begin{remark} In Minkowski space, $E^\pm$ are the standard 
advanced and retarded kernels.
\end{remark}
\begin{proof} 
We give the proof for $E^+$, with the existence of $E^-$ being 
completely analogous. First, notice that the standard formula
\begin{gather}
\int_{\Om} u \Box_g v  \vol(g)
= \int_\Om u \partial_\mu (g^{\mu\nu} \sqrt{-|g|} \partial_\nu v ) \, 
dx^0\wedge dx^1\wedge dx^2 \wedge dx^3 =
\int_\Om v \Box_g u  \vol(g)
\nonumber
\end{gather}
still holds for all $u,v \in \cD_{2,0}(\Om)$,  
despite $g$ being only continuous on the whole of $\Om$. 
This can be verified by performing integration by parts on $\Om^+$ and $\Om^-$
separately. The continuity of the metric across $\Si$ guarantees that the
resulting boundary integrals (over $\partial \Om^+$ and $\partial \Om^-$)
 cancel out.
$\Box_g$ is,
therefore, formally self-adjoint, hence we seek to find 
$E^+_x$ such that for all $u \in \cD_{2, 0}(\Om)$
\begin{gather}
\langle E^+_x, (\Box_g  + m^2) u\rangle = \langle \de_x, u \rangle.
\nonumber
\end{gather}
Following Choquet-Bruhat \cite{B}, we shall construct a parametrix  $\si^+_x$, 
i.e., a distribution satisfying\footnote{The terminology  ``parametrix"  was introduced by Hilbert and usually indicates an approximation for the fundamental solution of a linear equation.}
  \begin{gather}
 (\Box_g + m^2) \si_x^+ = \de_x - Q^+_x,
 \label{parametrix}
 \end{gather}
 where $Q^+_x$ is an integrable function supported on $\Ga_x \equiv \partial C^+(x)$.
 Indeed, we shall show that the fixed point argument employed in Ref.
 \cite{B}, where it is 
 assumed that the metric is sufficiently differentiable, still goes 
 through under our assumptions.

We start by investigating the restrictions that (\ref{dist_Einstein}) imposes
on the the quantities $b_{\al\be}$ --- these are the so-called conditions of shock, 
originally studied in Ref.  \cite{L2}. Because $g$ is $C^2$ over $\Om^+$ and $\Om^-$,
and in light of Assumption \ref{assumption_metric}, we have that
(\ref{Einstein_eq}) is satisfied in $\Om^+\cup\Om^-$ and, moreover,
\begin{gather}
R_{\al\be}^\prime = \kappa \rho_{\al\be}^\prime \, \text{ in } \, \Om.
\nonumber
\end{gather}
From this, (\ref{dist_Ricci}), (\ref{H_al_be}) and (\ref{dist_Einstein}) we conclude that
\begin{gather}
\varphi^\mu b_{ [ \al | \mu | } \varphi_{\be ] } = 0 \, \text{ on } \, \Si.
\label{conditions_shock}
\end{gather}
We now claim that $\varphi^\mu \nabla_\mu$ is a well-defined operator along $\Si$. 
In fact,  direct computation gives that  on $\Si$
\begin{gather}
\varphi^\mu \{ \{ \Ga_{\al \mu}^\be \} \} 
= \varphi^\mu b_{ [ \al | \mu | } \varphi_{\be ] }, 
\nonumber
\end{gather}
which vanishes by 
(\ref{conditions_shock}). From this, the regularity of $\varphi$ in 
Assumption \ref{assumption_metric} and the fact that $\Si$ is 
null-like \cite{L2}, it follows not only that $\varphi^\mu \nabla_\mu$ is
 well-defined 
over $\Si$ but also that
\begin{gather}
\varphi^\mu \nabla_\mu \varphi^\al = 0 \, \text{ on } \, \Si.
\nonumber
\end{gather}
As a consequence, geodesics are well-defined over and  span $\Si$.
It follows that $\Ga_x$ coincides with $\Si$, in the neighborhood of $x$, 
if $x \in \Si$ (recall that $\Si$ is null-like). Due to  discontinuities 
on derivatives of $g$ across $\Si$, this is the relevant situation
 where we need
to show that the proof of Choquet-Bruhat in Ref. \cite{B} can be adapted.
No difficulty arises when $x \notin \Si$, 
henceforth we assume that $x$ belongs to $\Si$.

We claim that having $\Ga_x = \Si$ is the crucial point that enables
us to mirror  Choquet-Bruhat's construction \cite{B} of 
$\si^+_x$. In fact, as it is briefly reviewed below, the parametrix
is first obtained over $\Ga_x$, and then extended. 
This naturally involves derivatives along
$\Ga_x$ of the coefficients of the differential operator under consideration.
In our case these coefficients are the metric components $g_{\al\be}$. Because $\Ga_x = \Si$ near 
$x$, such derivatives produce, under our assumptions, well defined and sufficiently regular functions 
(see (\ref{tang_der_g_zero}) and prior discussion).
Bearing this in mind, the proof follows very closely that of 
Ref. \cite{B}, thus we shall give only its main steps, 
stressing the key points where the low regularity
of $g$ is circumvented.

The parametrix $\si^+_x$ is given as the extension of a distribution  $\widetilde{\si}_x^+$
supported on $\Ga_x$, i.e., for any test function $u$, 
\begin{gather}
\langle \si_x^+, u \rangle = \langle \widetilde{\si}_x^+, 
\left. u \right|_{\Ga_x} \rangle_{\Ga_x},
\label{parametrix_tilde}
\end{gather}
where  $\langle \cdot , \cdot   \rangle_{\Ga_x}$ denotes 
the pairing for test functions defined on $\Ga_x$. $\widetilde{\si}_x^+$,
in turn, is obtained as (an approximation for) a fundamental solution
for a Laplace type operator $\widetilde{\Delta}$ over $\Ga_x$, which 
involves only the components $g_{ij}$ of the metric (the precise form
of $\widetilde{\Delta}$ is given in Ref. \cite{B-Long}). Experience suggests
the Ansatz
\begin{gather}
\widetilde{\si}_x^+ = \frac{ B_x }{ \tau },
\label{Ansatz}
\end{gather}
where $B_x$ is a sufficiently regular function to be 
determined, and $\tau$ is the 
canonical parameter\footnote{See e.g. Ref. \cite{B-Book}; essentially,
$\tau$ is a parameter
obtained when null geodesics
are viewed as projections on spacetime of bicharacteristics 
associated with the eikonal equation on the cotangent space.}
for the null geodesics, issued from $x$, that span $\Ga_x$;
notice that $\widetilde{\si}_x^+$ is singular at $x$ (where $\tau=0$).

With the Ansatz (\ref{Ansatz}) at hand, we compute $\widetilde{\Delta} \widetilde{\si}_x^+$
and work the above steps backwards, until we reach (\ref{parametrix}).
The validity of (\ref{parametrix}) requires certain derivatives in the 
direction transverse to $\Ga_x$ to drop out. Imposing this leads to the vanishing
of the coefficients  of such derivatives; we symbolically
denote these terms by $\mathcal{C}$ (again, see Ref. \cite{B-Long} for 
the precise expression). These coefficients involve $B_x$, the metric
and derivatives of both. Imposing $\mathcal{C} = 0$ determines a first order 
differential equation for $B_x$, which can be
rewritten as an integral equation. The fact that  derivatives of the metric
occur solely along $\Ga_x$ gives, in light of (\ref{tang_der_g_zero}),
 enough regularity to apply standard techniques
and solve this integral equation\footnote{We remark that many of the aforementioned steps are not immediately 
apparent on Ref. \cite{B}, but the reader can consult 
Ref. \cite{B-Long} where fairly detailed calculations are provided.}. Finding such a solution determines 
$Q_x^+$. 

From equality (\ref{parametrix}), it now follows that
\begin{gather}
u(x) = \int_{\Om} Q_x^+ u \vol(g) + \langle \si_x^+ , (\Box_g + m^2) u\rangle.
\nonumber
\end{gather}
This can be viewed as an integral equation for $u$, which can be solved by a 
Neumann-series type of technique as in chapter V of Ref. \cite{dR}, possibly after
shrinking $\Om$.  
The solution is unique and operates linearly and continuously over $u$, yielding
the desired kernel.
\end{proof}

Besides the presence, in $\mathcal{C}$,  of derivatives only along $\Ga_x$, 
another feature necessary for Choquet-Bruhat's
proof to carry over our setting is the dimensionality of the spacetime.
In $n$ spacetime dimensions\footnote{For $n$ even. The case $n$ odd is obtained by reduction from
an even dimensional space.}, (\ref{parametrix_tilde}) is replaced by
\begin{gather}
\langle \si_x^+, u \rangle = 
\sum_{j=0}^{\frac{n}{2} - 2 } \langle \widetilde{\si}_{j,x}^+, 
\left. \partial_n^j u \right|_{\Ga_x} \rangle_{\Ga_x},
\nonumber
\end{gather}
where each distribution $\widetilde{\si}_{j,x}^+$ takes a form analogous 
to (\ref{Ansatz}), with the power of $\tau$ now depending 
on $n$ and $j$. Following similar arguments shows that the integral equation
to be solved now involves the term
\begin{gather}
\sum_{j=0}^{\frac{n}{2} - 2 }  \langle \si_x^+ ,  \partial_n^j (\Box_g + m^2) u\rangle.
\nonumber
\end{gather}
This will lead to an ill-defined expression unless
further hypothesis on $g$ are considered.

\begin{remark}
The core part of the argument for the existence of the elementary
kernels $E^\pm_x$ can be traced back to some compatibility conditions
for the metric induced on $\Si$ (essentially (\ref{tang_der_g_zero}),
and the continuity of $g$ across $\Si$). If one adopts the point of view that $\Om^\pm$
are two separated spacetimes with a null boundary, these compatibility conditions 
indicate when it is possible to glue $\Om^+$ and $\Om^-$ along their boundaries, 
in a way that the pre-existing structures on $\Om^\pm$ extend
to $\Om = \Om^+ \cup \Si \cup \Om^-$.
This is exactly the approach taken by Clarke and Dray in Ref. \cite{CD},
where  conditions for such a gluing are studied. Although their
setting is significantly more general than ours --- their manifolds are
$C^1$ (piecewise $C^3$), while here we employ $C^2$ (piecewise $C^4$),
and they do not impose the field equations,  consequently 
the notion of a shock wave does not
play any role ---, surprisingly the only
necessary and sufficient condition for the two spacetimes to be joined is 
that the naturally induced three-metrics on the boundaries agree. We notice that,
 although similar 
results had been  known for space-like boundaries (see references 
in Ref. \cite{CD}), Clarke and Dray's result applies to all types of hypersurfaces.
It would be interesting to investigate if Proposition \ref{prop_elem_kernels} 
can be generalized to a setting similar to that of Ref. \cite{CD}. 
This would likely require a 
considerably different proof than the one presented here in
that the shock wave structure, which implies (\ref{tang_der_g_zero}),
has been used. Recent works on the the geometry of null 
hypersurfaces (e.g. Ref. \cite{HD} and references therein)
are likely to be important in this regard  (see also Ref. \cite{Cl},
where the notion of global hyperbolicity on low regularity spacetimes
is discussed in connection with the solvability of the wave equation
with rough data).
\end{remark}

As usual, we think of the elementary kernels $E^\pm_x$ as defining a distribution  
$E^\pm(x^\prime, x)$ in
$\Om \times \Om$ by
\begin{gather}
\langle E^\pm(x^\prime, x), u(x^\prime, x) \rangle_{\Om \times \Om} 
= \langle E^\pm_{x^\prime}(x), u(x^\prime, x) \rangle_{\Om \times \Om} 
.
\end{gather}
The lemma below verifies that the standard symmetry properties of these distributions
known to be true in the smooth setting,
continue to hold under our hypothesis.

\begin{lemma} Let $E^\pm(x,x^\prime)$ be as above. Then
\begin{gather}
E^+(x,x^\prime) = E^-(x^\prime,x),
\nonumber
\end{gather}
for $x,x^\prime \in \Om$.
\label{lemma_symm_kernel}
\end{lemma}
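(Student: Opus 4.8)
The plan is to deduce the reciprocity relation from the formal self-adjointness of $\Box_g + m^2$ already established in the proof of Proposition \ref{prop_elem_kernels}. Recall that there it was shown, by integrating by parts separately over $\Om^+$ and $\Om^-$ and using the continuity of $g$ across $\Si$ to cancel the boundary integrals over $\partial\Om^+$ and $\partial\Om^-$, that $\int_\Om u\, \Box_g v\, \vol(g) = \int_\Om v\, \Box_g u\, \vol(g)$ for all $u,v \in \cD_{2,0}(\Om)$. Granting an analogue of this identity for the elementary kernels, the statement follows formally: writing $\langle\cdot,\cdot\rangle$ for the distributional pairing and using $(\Box_g + m^2)E^\pm_y = \de_y$, one computes
\begin{gather}
E^-_{x'}(x) = \langle \de_x, E^-_{x'} \rangle = \langle (\Box_g + m^2)E^+_x, E^-_{x'} \rangle = \langle E^+_x, (\Box_g + m^2)E^-_{x'} \rangle = \langle E^+_x, \de_{x'} \rangle = E^+_x(x'),
\nonumber
\end{gather}
which is precisely $E^+(x,x') = E^-(x',x)$.

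The actual work is to make the middle equality rigorous, since $E^\pm_x$ are distributions rather than elements of $\cD_{2,0}(\Om)$ and their product is not a priori defined. First I would note that the pairing $\langle E^+_x, (\Box_g+m^2)E^-_{x'}\rangle$ makes sense thanks to the support properties from Proposition \ref{prop_elem_kernels}: $\operatorname{supp} E^+_x \subseteq C^+(x)$ and $\operatorname{supp} E^-_{x'}\subseteq C^-(x')$, so the relevant overlap lies in the causal diamond $C^+(x)\cap C^-(x')$. If $x'\notin C^+(x)$ this overlap is empty and both sides vanish, so there is nothing to prove; otherwise, after possibly shrinking $\Om$, the diamond is compact and the two distributions may legitimately be paired. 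To convert the formal chain into a genuine argument I would test against two functions: given $f,h \in \cD_{2,0}(\Om)$, form the potentials $u(x) = \int E^+(y,x) f(y)\,\vol(g)_y$ and $w(x) = \int E^-(y,x) h(y)\,\vol(g)_y$, which solve $(\Box_g+m^2)u = f$ and $(\Box_g+m^2)w = h$ with $\operatorname{supp} u$ in the future of $\operatorname{supp} f$ and $\operatorname{supp} w$ in the past of $\operatorname{supp} h$. Applying the Green identity to $u$ and $w$ and evaluating each side with the defining equations yields
\begin{gather}
\int\!\!\int E^+(y,x) f(y) h(x)\, \vol(g)_y \vol(g)_x = \int\!\!\int E^-(y,x) h(y) f(x)\, \vol(g)_y \vol(g)_x,
\nonumber
\end{gather}
and relabeling the dummy variables on one side gives $E^+(x,x') = E^-(x',x)$ as an identity of distributions on $\Om\times\Om$, which is the assertion.

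The main obstacle is justifying the Green identity at this regularity, on two counts. First, $u$ and $w$ are not genuine test functions: they have non-compact causal support and, more seriously, inherit singularities along the characteristic $\Ga = \Si$. One must check that $w\,(\Box_g+m^2)u$ is integrable and that the integration by parts goes through; the natural route is again to split the integral over $\Om^+$ and $\Om^-$ and verify that the contributions on $\Si$ cancel, exactly as in the self-adjointness argument of Proposition \ref{prop_elem_kernels}, the key inputs being the continuity of $g$ across $\Si$ together with the compatibility relations (\ref{tang_der_g_zero}). Second, one needs compactness of the causal diamond so that no boundary terms at $\partial\Om$ survive, which is where the shrinking of $\Om$ and the local causal structure afforded by $\Si$ being null-like enter. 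A clean alternative that avoids multiplying distributions directly is to mollify: approximate $E^-_{x'}$ by smooth functions $E^-_{x',\ve}$ supported near $C^-(x')$, apply the self-adjointness identity legitimately (pairing the distribution $E^+_x$ with a smooth function), and pass to the limit using the compact overlap to control supports; I expect this limit to be routine once the support bookkeeping is in place.
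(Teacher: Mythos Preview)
Your approach is correct and is the standard duality proof of reciprocity for advanced/retarded fundamental solutions; it differs in its packaging from the paper's argument. The paper does not pair two potentials via Green's identity. Instead it fixes a single test function $u$, forms only the potential $v_1(x')=\langle E^-(x,x'),u(x)\rangle$, checks that $(\Box_{x'}+m^2)v_1=u$ with support in the past of $\operatorname{supp}(u)$, and then invokes (and re-derives in this low-regularity setting) Choquet-Bruhat's representation formula: any solution of $(\Box+m^2)v=u$ with support compact towards the future equals $\langle E^+(x',x),u(x)\rangle$. Equality of the two representations for arbitrary $u$ gives $E^-(x,x')=E^+(x',x)$. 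The key technical step there is to pair the distribution $E^+_{x'}$ against $z\,v_1$ for a cutoff $z\in\cD_{2,0}(\Om)$ chosen to equal $1$ on the compact diamond $C^-(\operatorname{supp}(u))\cap C^+(x')$, and then use the self-adjointness of $\Box_g+m^2$ already established in Proposition~\ref{prop_elem_kernels}.

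The trade-off: your route is more symmetric and conceptually transparent, but concentrates all the difficulty in justifying Green's identity for \emph{two} potentials $u,w$ that are only as regular as the parametrix allows across $\Si$; this is exactly the obstacle you flag. The paper's route needs regularity of only one potential at a time (it pairs a genuine distribution $E^+_{x'}$ against the cutoff of a single solution), and the cutoff $z$ cleanly disposes of the support bookkeeping without having to argue compactness of a causal diamond from scratch. Your mollification alternative would work, but the paper's uniqueness/representation argument is the slicker way to avoid multiplying two singular objects.
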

\begin{proof} Fix a test function $u$ and define
\begin{gather}
v_1(x^\prime) = \langle E^-(x, x^\prime), u(x) \rangle.
\nonumber
\end{gather}
Since $E^-_{x}(x^\prime)$ has support on the past of $x$, $v_1$ is supported
on the past of the support of $u$. We have
\begin{gather}
(\Box_{x^\prime} + m^2) v_1(x^\prime) =
 \langle(\Box_{x^\prime} + m^2) E^-(x, x^\prime), u(x) \rangle
 = \langle \de(x,x^\prime), u(x) \rangle = u(x^\prime),
\nonumber
\end{gather}
where $\Box \equiv \Box_g$, $\Box_x$ means that derivatives are with respect
to the $x$ variable, and we used the symmetry of the Dirac delta.

For sufficiently smooth metrics, Choquet-Bruhat proved \cite{B_comm_Leray}
that any solution $v_2$ of 
\begin{gather}
(\Box + m^2)v_2 = u
\label{eq_v_2}
\end{gather}
with support compact towards the future\footnote{$K \subset \Om$ is said compact
towards the future if $C^-(K) \cap C^+(x)$ is compact of empty for each $x \in \Om$;
a similar notion applies for compact towards the past.} is given by
\begin{gather}
v_2(x^\prime) = \langle E^+(x^\prime,x), u(x) \rangle,
\label{rep_sol}
\end{gather}
provided that $\Om$ is taken sufficiently small.
An inspection on her proof shows that the same statement still holds in our setting.
In fact, the set $K = C_-(\operatorname{supp}(u)) \cap C_+(x)$ is compact (or empty).
Choose a function $z \in \cD_{2,0}(\Om)$ equal to $1$ in a compact neighborhood
of $K$. If $v_2$ satisfies (\ref{eq_v_2}), we obtain that for any 
$x^\prime \in C_-(\operatorname{supp}(u))$,
\begin{gather}
\langle E^+(x^\prime,x), u(x) \rangle = 
\langle E^+(x^\prime,x),(\Box _x+ m^2)v_2(x)\rangle 
\nonumber 
\\
=
\langle E^+(x^\prime,x),(\Box _x+ m^2)( z(x) v_2(x))\rangle.
\nonumber
\end{gather}
But, on the other hand, if
\begin{gather}
\widetilde{v}_2(x^\prime) = \langle E^+(x^\prime,x), u(x) \rangle,
\nonumber
\end{gather}
then
\begin{gather}
\langle E^+(x^\prime,x), u(x) \rangle = 
\langle  (\Box _x+ m^2) E^+(x^\prime,x), z(x) v_2(x) \rangle = \widetilde{v}_2(x^\prime),
\nonumber
\end{gather}
which shows (\ref{rep_sol}). We conclude that $v_1 = v_2$, from what the result follows.
\end{proof}
Following standard convention and terminology, we then define
the \emph{propagator} of $\Box_g + m^2$ as the distribution on $\Om \times \Om$
given by
\begin{gather}
G(x,x^\prime) = E^+(x^\prime,x) - E^-(x^\prime,x).
\nonumber
\end{gather}
It is seen that
\begin{gather}
(\Box_x + m^2 ) G(x,x^\prime) =0.
\nonumber
\end{gather}
In light of Lemma \ref{lemma_symm_kernel}, 
\begin{gather}
G(x, x^\prime) = - G(x^\prime,x).
\nonumber
\end{gather}

Consider the massive wave equation in $\Om$,
\begin{gather}
(\Box_g + m^2)u = 0.
\label{wave}
\end{gather} 
Let us denote by $D^+(K)$ (resp. $D^-(K)$) the future (resp. past) domain of 
dependence of an achronal set $K$, and $D(K) = D^+(K) \cup D^-(K)$.
We next suppose that:

\begin{assump} There exists in $\Om$ a kernel $\cG(x,x^\prime)$, which is a 
solution (in $x$) of (\ref{wave}), satisfying $\cG(x,x^\prime) = \cG(x^\prime,x)$,
and such that, for each space-like three surface  $\cS \subset \Om$
and $x,x^\prime \in D(\cS) \cap \Om$,
the following holds:
\begin{gather}
G(x,x^\prime) = \int_{\cS} \Big( \cG(x,y) \partial_\mu \cG(x^\prime,y) - \cG(x^\prime,y) 
\partial_\mu \cG(x,y) \Big)\, dA_g^\mu(y),
\label{main_formula}
\end{gather}
where $dA_g$ is the volume element induced on $\cS$ by $g$.
\label{main_assump}
\end{assump}
\begin{remark} In Minkowski space, $\cG$ is the so-called $D^1$ distribution
associated with the Pauli-Jordan propagator \cite{BS}.
\label{Pauli-Jordan}
\end{remark}
As in the usual case of smooth coefficients, for $x \in D(\cS) \cap \Om$
we have the following formula for a solution $u$ of the Cauchy problem of
(\ref{wave}) \cite{B,B_comm_Leray},
\begin{gather}
u(x) = \int_{\cS} 
\Big( u(y) \partial_\mu G(x,y) - G(x,y) \partial_\mu u(y) \Big)\, dA_g^\mu(y).
\nonumber
\end{gather}
This determines the solution $u$ in terms of the Cauchy data
$q:= \left. u \right|_\Si$ and $p:= \left. \partial_\nu u \right|_\Si$.
The pair $(q,p)$ plays the role of a point on phase space. 
We now define $\widetilde{u}$, depending on $u$, by
\begin{gather}
\widetilde{u}(x) = \int_{\cS} 
\Big( u(y) \partial_\mu \cG(x,y) - \cG(x,y) \partial_\mu u(y) \Big)\, dA_g^\mu(y),
\nonumber
\end{gather}
and introduce the norm
\begin{gather}
(u,u) = \int_{\cS} 
\Big( u(y) \partial_\mu \widetilde{u}(y)  - \widetilde{u}(y) \partial_\mu u(y) \Big)\, dA_g^\mu(y).
\nonumber
\end{gather}
Define
\begin{gather}
G^+ = \frac{G - i \cG}{2}, \,\,\, G^- = \frac{ G + i \cG}{2},
\nonumber
\end{gather}
and
\begin{gather}
u^+ = \frac{u - i \widetilde{u}}{2}, \,\,\, u^- = \frac{ u + i \widetilde{u}}{2}.
\nonumber
\end{gather}
\begin{proposition} The following identity holds
\begin{gather}
(u^+,u^+) = (u^-,u^-) = \frac{1}{2} (u,u).
\nonumber
\end{gather}
\label{prop_iden}
\end{proposition}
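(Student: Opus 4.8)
The plan is to reduce the whole identity to the algebra of the map $u\mapsto\widetilde u$ relative to the bilinear Wronskian pairing
\[
W(v,w):=\int_{\cS}\big(v\,\partial_\mu w-w\,\partial_\mu v\big)\,dA_g^\mu .
\]
First I would record two elementary facts. Whenever $v,w$ solve \eqref{wave} the current $v\,\partial^\mu w-w\,\partial^\mu v$ is divergence free, since its divergence equals $v\,\Box_g w-w\,\Box_g v=-m^2vw+m^2wv=0$; hence $W(v,w)$ does not depend on the space-like slice $\cS$, the divergence theorem applying across $\Si$ exactly as in the proof of Proposition \ref{prop_elem_kernels} (the continuity of $g$ makes the two contributions on $\Si$ cancel). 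Moreover $W$ is antisymmetric and $W(v,v)=0$ for every single solution $v$. In this language $(u,u)=W(u,\widetilde u)$, the tilde operation is $\widetilde u(x)=W\big(u,\cG(x,\cdot)\big)$ (the pairing taken in the free variable of $\cG$), the representation formula for solutions reads $u(x)=W\big(u,G(x,\cdot)\big)$, and Assumption \ref{main_assump} is precisely $G(x,x')=W\big(\cG(x,\cdot),\cG(x',\cdot)\big)$. Since we will find below that $\widetilde{u^\pm}=\pm i\,u^\pm$, the literal bilinear reading would give $W(u^\pm,\pm i\,u^\pm)=0$, so the form in the statement must be understood as the sesquilinear extension $(v,w)=W(\overline v,\widetilde w)$, which agrees with the displayed expression for real $u$; because the two arguments of $(u^\pm,u^\pm)$ coincide it is immaterial which slot is conjugated.

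The crux is the identity $\widetilde{\widetilde u}=-u$ on solutions, and this is where Assumption \ref{main_assump} is used. Writing $\widetilde{\widetilde u}(x)=W\big(\widetilde u,\cG(x,\cdot)\big)$ as an integral over $\cS$ and inserting the representation of $\widetilde u$ over a second slice $\cS'$, I would interchange the two integrations (Fubini, and differentiation under the integral sign, being legitimate for the kernels constructed in Proposition \ref{prop_elem_kernels}). For each fixed point of $\cS'$ the remaining integral over $\cS$ is, after the symmetry $\cG(x,x')=\cG(x',x)$ of Assumption \ref{main_assump}, exactly the right-hand side of \eqref{main_formula}, so it collapses to a value of $G$. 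Collecting the terms that multiply $u$ and those that multiply $\partial_\mu u$ on $\cS'$ then yields
\[
\widetilde{\widetilde u}(x)=\int_{\cS'}\big(u(w)\,\partial_\mu G(w,x)-G(w,x)\,\partial_\mu u(w)\big)\,dA_g^\mu(w),
\]
and the antisymmetry $G(w,x)=-G(x,w)$ of Lemma \ref{lemma_symm_kernel} turns the right-hand side into minus the representation formula for $u$, giving $\widetilde{\widetilde u}=-u$. An immediate consequence is
\[
\widetilde{u^\pm}=\tfrac12\big(\widetilde u\mp i\,\widetilde{\widetilde u}\big)=\tfrac12\big(\widetilde u\pm i\,u\big)=\pm\,i\,u^\pm ,
\]
so that $u^\pm$ are eigen-solutions of the tilde operation with eigenvalues $\pm i$, and equivalently $\widetilde u=i(u^+-u^-)$.

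It then remains to assemble the pieces. Since $\cG$ is real and $u$ is a real solution, $\widetilde u$ is real, whence $\overline{u^+}=u^-$ and $\overline{u^-}=u^+$. Using bilinearity, antisymmetry and $W(v,v)=0$,
\[
(u,u)=W(u,\widetilde u)=W\big(u^++u^-,\,i(u^+-u^-)\big)=-2\,i\,W(u^+,u^-),
\]
while the sesquilinear form gives
\[
(u^+,u^+)=W\big(\overline{u^+},\widetilde{u^+}\big)=W\big(u^-,i\,u^+\big)=-\,i\,W(u^+,u^-),
\]
and likewise $(u^-,u^-)=W\big(\overline{u^-},\widetilde{u^-}\big)=W\big(u^+,-i\,u^-\big)=-\,i\,W(u^+,u^-)$. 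Comparing the three displays gives $(u^+,u^+)=(u^-,u^-)=\tfrac12(u,u)$, as claimed.

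The only genuine obstacle I anticipate is the middle step: justifying, in the present low-regularity setting, the double slice-integral manipulation that produces $\widetilde{\widetilde u}=-u$ — that is, that Fubini, the differentiations under the integral, and the slice-independence all survive the mere continuity of $g$ across $\Si$. These are exactly the points already under control in Proposition \ref{prop_elem_kernels} and Lemma \ref{lemma_symm_kernel}, where the boundary contributions on $\Si$ cancel by continuity of $g$; no analytic input beyond what is already established should be required, and the surrounding manipulations are purely algebraic.
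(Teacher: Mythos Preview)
Your proof is correct and follows the same route as the paper, whose entire argument is the single line ``This follows directly from the above formulas.'' You have simply made explicit the two ingredients the paper leaves implicit --- the identity $\widetilde{\widetilde u}=-u$ (equivalently $J^2=-\id$, which the paper records only in item \textbf{I} immediately \emph{after} the proposition) and the sesquilinear reading of $(\cdot,\cdot)$ on complex solutions --- so there is nothing to correct.
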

\begin{proof} This follows directly from the above formulas.
\end{proof}
We can now carry out the
quantization of a free scalar field over the background $(\Om,g)$ by
following the corresponding steps in the quantization of fields defined over a smooth
curved background, as originally proposed by Lichnerowicz \cite{L3}, 
and extended by other authors, in particular Wald \cite{W} (see also Ref. \cite{HW}). 
In order to obtain a well-defined quantization procedure, one has to
be  specific about the structure of the operators involved, indicating
their domains, self-adjointness properties etc. However, 
with the propagator $G(x,x^\prime)$, Proposition 
\ref{prop_iden}, and formula (\ref{main_formula}) at hand, 
these constructions are the same as in Ref. \cite{L3} (see also 
Ref. \cite{W}). Hence, for the sake of simplicity, 
we shall restrict ourselves to indicating what the main features of the  corresponding 
quantum theory are.
But before doing that we
first recall, rather briefly, some core features of the canonical 
quantization of a scalar field in Minkowski space (as can found, e.g.,
in Ref. \cite{BS}), so that the reader 
less familiar with Ref. \cite{L3} will be able to see the close parallel.

In Minkowski space, solutions $\phi$ to 
$(\Box + m^2)\phi = 0$ can be written, with the
help of Fourier transform and using standard notation, as
\begin{gather}
\phi(x) = \phi^+(x) + \phi^-(x) = \int e^{ik \cdot x} \phi^+(\vec{k}) c(k^0) \,  d \vec{k}  +
\int e^{-ik \cdot x} \phi^-(\vec{k})c(k^0) \, d \vec{k},
\nonumber
\end{gather}
where $\cdot$ is the Lorentz inner product, $k^0 = \sqrt{\vec{k}^2 + m^2}$, $c(k^0)$ is 
a normalization factor, and the functions $\phi^\pm(\vec{k})$ are constructed out of the Fourier
transform of $\phi$ (see Ref. \cite{BS}). A map $\phi \mapsto \phi^\pm$ is naturally
obtained in this way, with the fields $\phi^\pm$ called the positive and negative
energy components of the field $\phi$; these, in momentum representation and upon
quantizing, are associated with the creation and annihilation operators of the theory. Furthermore,
the corresponding quantum fields (operators) obey canonical commutation relations that
are postulated in a prescribed fashion out of the Poisson brackets of the classical theory.
For instance, one has
\begin{gather}
[ \phi^-(x), \phi^+(x^\prime) ] = -i D^-(x - x^\prime),
 \nonumber
\end{gather}
where $D^-$ is the negative energy component of the Pauli-Jordan propagator
(see remark \ref{Pauli-Jordan}). The reader can consult the standard literature
(e.g. Ref. \cite{BS}) to refresh his or her memory of the canonical quantization of 
a scalar field in Minkowski space.

Keeping the quantization in Minkowski space in mind for the purpose of 
analogy, we turn attention back to the formulation treated in this paper. We have:
\vskip 0.25cm
\noindent \textbf{I.} 
The association
$u \mapsto \widetilde{u}$ defines a linear automorphism $J$ on the space of 
solutions\footnote{A $2$-form $\omega$ can also be introduced  
in the space of solutions, and shown to  define a symplectic structure. $\omega$ and
$J$ are compatible and define an (infinite dimensional)  K\"ahler structure. These 
intrinsic
quantities can be used to  derive an appropriate
notion of inner product to carry out the quantization procedure. See 
Ref. \cite{L3} and \cite{W} for details.} of (\ref{wave}).
$u^\pm$ are eigenfunctions of  $J$ corresponding to the eigenvalues $\pm i$.
The projections $u \mapsto u^\pm$ correspond exactly to the decomposition of the
field into positive and negative energies.
\vskip 0.25cm
\noindent \textbf{II.} Assume from now on that $u$ is operator-valued. Our definitions and formula (\ref{main_formula})  imply
\begin{align}
& [u^+, u^+ ] = 0 = [u^-, u^-], \nonumber \\
& [ u^+(x), u^-(x^\prime) ] = -i G^+(x, x^\prime) \id, \nonumber \\
& [ u^-(x), u^+(x^\prime) ] = -i G^-(x, x^\prime) \id, \nonumber \\
& [ u(x), u(x^\prime) ] = -i G(x, x^\prime) \id,
\nonumber
\end{align}
where $\id$ is the identity operator, and the pointwise commutators
are interpreted as formally expressing the corresponding distributional
identity, e.g.
\begin{gather}
[ u(f), u(h) ] = -i \int_{\Om\times\Om} f(x) G(x,x^\prime) h(x^\prime) 
\sqrt{-|g|(x)} \sqrt{-|g|(x^\prime)} \, dx\, dx^\prime,
\nonumber
\end{gather}
where $f$ and $g$ are test functions.
\vskip 0.25cm
\noindent \textbf{III.} A theory of creation and annihilation operators can be constructed
from these formulas and (\ref{main_formula}). All the usual propagators can be 
deduced from the elementary kernels and $\cG$. 
\vskip 0.25cm
\noindent \textbf{IV.} In Minkowski space, the above decomposition agrees with 
the usual split of $u$ into negative and positive energy solutions via Fourier transform.
\vskip 0.25cm

One feature that stands out is the existence of positive 
energy solutions. This seems to be in contradiction with the
established fact that in a general spacetime there is no natural
notion of positive frequency (which is itself a consequence of the lack of
 uniqueness for the vacuum state in such situations). 
Recall, however, that such a notion 
is well defined for spacetimes that are (i) globally hyperbolic and (ii) stationary.
The former property has been implicitly employed by addressing the problem 
solely from a local perspective and considering 
$x, x^\prime \in D(\cS) \cap \Om$ in Assumption \ref{main_assump}. These 
 are not essential: when considering a global point of
view, it is rather natural to restrict oneself to  globally hyperbolic spacetimes,
and our lemmas and propositions can be extended to this  setting (recall
that we focused on local constructions because the approximation arguments
we used are local in nature).

Although not implicitly assumed, property (ii)  was ``almost" assumed, in the 
following sense. We have supposed the existence of the kernel $\cG$,  but sufficient  
conditions that  imply the validity of Assumption \ref{main_assump} have
not been given. It can be shown, however, that $\cG$ in fact exists for globally 
hyperbolic stationary spacetimes \cite{M}
(Euclidean at infinity for non-compact Cauchy surfaces;
see also Ref. \cite{C}). But, in the special case of globally hyperbolic 
stationary spacetimes, the 
decomposition into positive and negative energy solutions does in fact 
hold true \cite{W}.  Since it is not know, however, 
whether global hyperbolicity and stationarity are also necessary conditions for 
the existence of $\cG$ \cite{L3}, here we preferred to take the slightly more general 
point of view of postulating the existence of $\cG$ itself.


\begin{thebibliography}{0}
\bibitem{HE} S. W. Hawking and G. F. R. Ellis, \emph{The Large Scale Structure 
of Space-Time.} Cambridge University Press (1973).

\bibitem{H} S. W. Hawking, \emph{Particle Creation by Black Holes.}
Communications in Mathematical Physics, 43, 199-220 (1975).

\bibitem{AD} A. F. Agnew, T. Dray, \emph{Distributional Modes for Scalar Field Quantization.} Gen. Rel. Grav.  33 429-453  (2001).

\bibitem{L1} A. Lichnerowicz, \emph{Propagateurs et commutateurs en relativit\'e
g\'enerale.} Publications math\'ematiques de l' I.H.\'E.S, tome 10 (1961), p. 5-56.

\bibitem{D} T. Dray, \emph{Tensor Distributions in the Presence of Degenerate Metrics.}
Int.J.Mod.Phys. D6  717-740 (1997).

\bibitem{L2} A. Lichnerowicz, \emph{Sur les ondes de choc gravitationnelles et
 \'electromagn\'etiques.} Colloques Internationaux C.N.R.S no. 220 -- Ondes et radiations gravitationnelles.  Paris (1973), pp. 47-56.

 \bibitem{L3} A. Lichnerowicz, \emph{Quantum field theory on a curved background}.
 International Conference on the Mathematical Problems of Quantum Field Theory and Quantum Statistics. Part I. Axiomatic Quantum Field Theory. 
Trudy Mat. Inst. Steklov. 135 (1975), 178–185, 260.

\bibitem{DP1} P. D. D'Eath and P. N. Payne, \emph{Gravitational radiation in black-hole collisions at the speed of light. I. Perturbation treatment of the axisymmetric collision.}
Phys. Rev. D 46, 658-674 (1992).

\bibitem{DP2} P. D. D'Eath and P. N. Payne, \emph{Gravitational radiation in black-hole collisions at the speed of light. II. Reduction to two independent variables and calculation of the second-order news function.} Phys. Rev. D 46, 675-693 (1992).

\bibitem{DP3} P. D. D'Eath and P. N. Payne, 
\emph{Gravitational radiation in black-hole collisions at the speed of light. III. Results and conclusions.} Phys. Rev. D 46, 694-701 (1992).

\bibitem{HS} K. Hayashi, T. Samura, \emph{Gravitational shock waves for Schwarzschild and Kerr black holes.} Phys. Rev. D 50, 3666-3675 (1994).

\bibitem{tH} G. 't Hooft, \emph{Graviton dominance in ultra-high-energy scattering.}
Physics Letters B Volume 198, Issue 1, 12 November 1987, Pages 61-63.

\bibitem{HU} M. Hortacsu and K. Ulker, \emph{Gravitational shock waves and vacuum fluctuations.} Class.Quant.Grav. 15 (1998) 1415-1420.

 \bibitem{VV} H. Verlinde and E. Verlinde, \emph{Scattering at Planckian Energies.}
 Nucl.Phys. B371  246-268 (1992).

 \bibitem{S} L. Schwartz, \emph{Th\'eorie des distributions}. Hermann, 2 vols. (1950/1951).

\bibitem{B} Y. Four\`es-Bruhat (Choquet-Bruhat),
\emph{Solutions \'el\'ementaires d'\'equations du second ordre de type quelconque.} Coll. int, sur la th\'eorie des \'equations aux d\'eriv\'ees partielles, Nancy (C.N.R.S Paris) (1956).

\bibitem{B-Long}  Y. Four\`es-Bruhat (Choquet-Bruhat), 
\emph{Th\'eor\`eme d'existence pour certains syst\`emes d'\'equations aux d\'eriv\'ees partielles non lin\'eaires.}  Acta Math. 88, (1952). 141-225. 

\bibitem{B-Book} Y. Choquet-Bruhat. \emph{General Relativity and the Einstein Equations.}
Oxford University Press, USA (2009).

\bibitem{dR} G. de Rham, \emph{Differentiable Manifolds.} Springer-Verlag (1984).

\bibitem{CD}  C. J. S Clarke and T.  Dray, \emph{Junction conditions for null hypersurfaces.} 
Classical Quantum Gravity 4 (1987), no. 2, 265-275.

\bibitem{HD}  D. Hickethier and T. Dray, \emph{Covariant Derivatives on Null Submanifolds.}
 Gen. Rel. Grav. 44, 225-238 (2012).

\bibitem{Cl}  C. J. S. Clarke, \emph{Generalized hyperbolicity in singular spacetimes.} Classical Quantum Gravity 15 (1998), no. 4, 975-984. 

\bibitem{B_comm_Leray} Y. Four\`es-Bruhat (Choquet-Bruhat),
\emph{Propagateurs et solutions d'\'equations homog\`enes hyperboliques.} Comptes rendus Acad. Sc. Paris, tome 251, p.  29-31 (1960).

\bibitem{BS} N. N. Bogoliubov and D. V. Shirkov, \emph{Introduction to the theory
of quantized fields.} John Wiley \& Sons Inc; 3rd edition (1980).

 \bibitem{W} R. M. Wald, \emph{Quantum Fields Theory in Curved Spacetime and 
 Black Hole Thermodynamics.} The University of Chicago Press (1994).

\bibitem{HW} S. Hollands and R. M. Wald, \emph{Axiomatic quantum field theory in curved spacetime.} Communications in Mathematical Physics. 293:85-125 (2010).

\bibitem{M} C. Moreno, \emph{On the spaces of positive and negative frequency solutions of the Klein-Gordon equation in curved space-times.} Rep. Math. Phys. 17, 333-358 (1980).

\bibitem{C} E. Combet, \emph{Op\'erateurs sur l'espace des solutions du dalembertien statique.} C. R. Acad. Sci. Paris A 262, 1077 (1966).

\end{thebibliography}
\end{document}